\documentclass[smallextended,envcountsect]{svjour3}
\usepackage[margin=3cm]{geometry}
\usepackage{amscd, amsmath, amssymb, appendix, authblk, enumerate, enumitem, float, mathtools, natbib,  subcaption, tabu, tikz}
\usetikzlibrary{matrix,arrows,positioning,automata,shapes,shapes.multipart} %knots,
\setlength{\parskip}{5pt}

% I've set this up so that equations and theorems (etc)
% have a internal numbering. If you don't want this change
% `equation' below to something else

%%%%%%%%%%%%%%%%%%%%%%%%%%%%%%%%%%%%%%%%%%%%%%%%%%%%%%%%

% personal macros

\DeclarePairedDelimiter\floor{\lfloor}{\rfloor}

% natural numbers, real numbers

\newcommand{\ra}{\rightarrow}

\providecommand{\keywords}[1]
{
	\small	
	\textbf{\textit{Keywords---}} #1
}

% for f:X -\rangle Y; the default spacing isn't great

\usepackage{color}

\smartqed

%%%%%%%%%%%%%%%%%%%%%%%%%%%%%%%%%%%%%%%%%%%%%%%%%%%%%%%%

\title{On the comparison of incompatibility of split systems across different numbers of taxa}
\author{Michael Hendriksen \and Nils Kapust}
\institute{Institut f{\"u}r Molekulare Evolution, Heinrich-Heine Universit{\"a}t}

\begin{document}

\maketitle	
	
	\begin{abstract}
We consider the problem of how many phylogenetic trees it would take to display all splits in a given set, a problem related to $k$-compatibility. A set of trees that display every single possible split is termed a \textit{universal tree set}. In this note, we find the universal incompatibility $U(n)$, the minimal size of a universal tree set for $n$ taxa. By normalising incompatibility using $U(n)$, one can then compare incompatibility of split systems across different numbers of taxa. We demonstrate this application by comparing two SplitsTree networks derived from archaeal genomes, with different numbers of taxa.
	\end{abstract}

\keywords{phylogenetic trees, compatibility, split systems, bipartitions, matchings, SplitsTree, archaeal genomes}

\section*{Acknowledgements}
The authors thanks Prof. Dr. W. F. Martin, the Volkswagen Foundation 93\_046 grant and the ERC Advanced Grant No. 666053 for their support during this research. The authors would also like to thank Andrew Francis for helpful comments on a draft, and Mike Steel, Fernando Tria and Falk Nagies for illuminating conversations on this topic. Finally, we thank the anonymous reviewers for their useful suggestions.

\section{Introduction}

Phylogenetic trees are ubiquitously used to represent the evolutionary history of organisms \citep{felsenstein2004inferring}. Each edge in an unrooted phylogenetic tree corresponds to a bipartition of the taxa set and a given phylogenetic tree can be uniquely identified with the set of bipartitions induced by its edge set. However, data can often produce conflicting results, whether through measurement error or complex biological phenomena such as incomplete lineage sorting or lateral gene transfer. This can result in splits that contradict each other.

This naturally gave rise to the concept of $k$-compatibility (first studied by this name by \cite{dress20012kn}, originally studied as $k$-cross-free families by \cite{inbook}), which, given a set of splits $S$, asks for the maximum size of a subset of $S$ in which every split is pairwise incompatible with one another. If this subset is of size $k$, the split system is termed $k$-compatible. We consider a related concept of incompatibility which is arguably more natural --- that of the smallest number of phylogenetic trees it would take to display all splits indicated by the data, a so-called \textit{minimal tree set}. In the case that $S$ is the set of all possible splits for a set of taxa $X$, we say that a set of phylogenetic trees that display all splits in $S$ is a \textit{universal tree set}.

In the present paper, we consider the question of maximum possible incompatibility in this way - that is, given a set of taxa $X$ of size $n \ge 2$, how large is a universal tree set of minimum size? We term this \textit{universal incompatibility}, and represent it with $U(n)$. This can also be characterised as finding the minimum $k \ge 1$ such that every split system $S$ on $X$ can be displayed by $k$ phylogenetic trees.

By characterising $U(n)$, one can then contextualise a split system in terms of how incompatible it is compared to the worst case scenario --- that is, the scenario in which every possible split is contained in our split system. Further, by normalising the minimal tree set size using $U(n)$, we can then compare incompatibility of split systems across different numbers of taxa.

Of particular interest is the fact that the widely-used SplitsTree software \citep{huson2006application} creates a so-called \textit{split network}, which is used to represent conflicting split signals from data. The present results will now allow those who use SplitsTree to fairly compare incompatibility of data across different numbers of taxa.

In Section 2 we provide background information. In Section 3 we prove some lemmas on bipartitions. In Section 4 we apply these lemmas and some classical theorems to prove the main result. In Section 5 we then apply these results to compare the incompatibility of two SplitsTree networks of different sizes derived from archaeal genomes.

\section{Background}

A phylogenetic tree on a set of taxa $X$ is a connected acyclic graph $(V,E)$ such that there are no vertices of degree $2$ and the degree-$1$ vertices (termed \textit{leaves}) are bijectively labelled by the elements of $X$.	

Recall that a \textit{split} $A|B$ of a set $X$ is a bipartition of $X$ into two non-empty sets
$A$, $B$; where $B = X \backslash A$. Define the \textit{size} of a split $A|B$ to be $\min(|A|,|B|)$. We denote by $\mathcal{S}(X)$ the set of
all splits of $X$, and any subset $S$ of $\mathcal{S}(X)$ is called a \textit{split system} on $X$. Any split of size $1$ is termed \textit{trivial}.

Given a phylogenetic tree $T=(V,E)$ on $X$, each edge can be associated with a split in the following way. If an edge $e$ is deleted from $T$, this disconnects the graph into two components, each with at least one labelled vertex. This naturally induces a bipartition on the leaf set $A|B$, which we call the split \textit{associated with} $e$, and we say that $A|B$ is \textit{displayed} by $T$ if it is associated with some edge of $T$.

It is well-known \citep{buneman1971recovery} that two splits $A|B$ and $C|D$ can only be displayed by the same phylogenetic tree if either one or two of the four intersections

\[A \cap C, A \cap D, B \cap C, B \cap D\]
is empty (noting that if two intersections are empty, $A|B$ and $C|D$ represent the same split). If this condition is met by each pair of splits in a split system $S$, we say that $S$ is \textit{pairwise compatible}, and if not, the set of splits is termed \textit{incompatible}. In fact, $S$ is pairwise compatible if and only if $S$ corresponds to a phylogenetic tree in the following way.

\begin{theorem}[Splits Equivalence Theorem, \cite{buneman1971recovery}]
	Let $S$ be a collection of splits on $X$. Then,	$S$ is the set of splits of some phylogenetic $X$-tree $T$ if and only if $S$ contains all trivial splits on $X$ and $S$ is pairwise compatible. The tree $T$ is unique up to isomorphism.
\end{theorem}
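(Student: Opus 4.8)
The plan is to prove the two implications separately and then dispatch uniqueness. For the \textbf{necessity} direction I assume $S$ is the split set of a phylogenetic $X$-tree $T$ and verify the two stated properties. The trivial splits are immediate: for each $x \in X$ the pendant edge at the leaf labelled $x$ induces exactly the split $\{x\}|X\setminus\{x\}$, so all trivial splits occur in $S$. For pairwise compatibility I take two distinct edges $e,f$ of $T$ inducing splits $A|B$ and $C|D$. Deleting $e$ breaks $T$ into two subtrees whose leaf sets are $A$ and $B$; since $T$ is acyclic, $f$ lies wholly inside one of these subtrees, say the one with leaf set $B$. Then both endpoints of $f$ sit in the $B$-side, so one side of $f$ has all its leaves contained in $B$, forcing one of $A\cap C$, $A\cap D$ to be empty. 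This is precisely the Buneman condition, so $S$ is pairwise compatible.

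For the \textbf{sufficiency} direction I would argue by induction on the number of non-trivial splits in $S$. The base case, where $S$ consists only of the trivial splits, is realised by the star tree with a single interior vertex adjacent to all $n$ leaves, whose only splits are the trivial ones. For the inductive step I choose a non-trivial split $A|B\in S$ and let $T'$ be a tree realising $S\setminus\{A|B\}$, which exists by the inductive hypothesis. The key claim is that compatibility of $A|B$ with every split of $T'$ forces the leaves of the smaller side (say $A$) to be gathered around a single interior vertex $v$ of $T'$, so that $v$ can be \emph{expanded}: one replaces $v$ by a new edge $e$ and redistributes the edges formerly incident to $v$ to its two new endpoints according to which side of $A|B$ each leads toward. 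One then checks that the expanded tree displays exactly $S$, the new edge $e$ contributing $A|B$ while every old split is preserved and no spurious split is created.

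The hard part will be making this expansion claim rigorous: I must show that the edges at $v$ can be partitioned consistently with $A|B$, which is exactly where compatibility of $A|B$ with \emph{all} splits of $T'$ (not merely some) is needed to rule out any edge at $v$ leading to leaves from both $A$ and $B$. Concretely I would locate $v$ as the common meeting point of the subtrees indexed by the splits of $T'$ "nearest" to $A|B$ and argue by contradiction that a mixed edge would produce an incompatible pair. Finally, for \textbf{uniqueness} up to isomorphism I would use the bijection between edges and displayed splits: two trees realising the same $S$ have edge sets in canonical correspondence, and matching leaves by label together with matching interior vertices by the family of splits on each side yields the required isomorphism. I expect uniqueness to follow cleanly once the edge--split bijection is in place, so that the genuine work lies entirely in the expansion step of the induction.
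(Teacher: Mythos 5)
First, a remark on the comparison you asked for: the paper does not prove this statement at all --- it is quoted as a classical result of Buneman (1971) and used as a black box --- so there is no in-paper argument to measure yours against. Your plan is the standard textbook proof (tree $\Rightarrow$ splits by locating the second edge inside one component; splits $\Rightarrow$ tree by induction with vertex expansion). Your necessity direction is complete and correct: $f$ lies in one component of $T-e$, so one side of the split of $f$ has its leaves contained in $B$ and is therefore disjoint from $A$.

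The genuine gap is in the sufficiency direction, and it is exactly the step you flag yourself: you never establish that the expansion vertex $v$ exists, and ``locate $v$ as the common meeting point of the subtrees indexed by the splits of $T'$ nearest to $A|B$'' is a description of intent rather than an argument. The clean way to close it is as follows. For each edge $e$ of $T'$ with split $C|D$, compatibility with $A|B$ says one of $A\cap C$, $A\cap D$, $B\cap C$, $B\cap D$ is empty, i.e.\ at least one of $C,D$ is contained in $A$ or in $B$ (call such a side \emph{pure}); both sides cannot be pure, since $C\subseteq A$ and $D\subseteq B$ with $C\cup D=X$ would force $C=A$ and $D=B$, contradicting the fact that $A|B$ is not a split of $T'$ (and both sides inside the same block is impossible as $A,B\neq\emptyset$). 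Orient every edge of $T'$ toward its unique non-pure side; a finite tree with every edge oriented has a vertex $v$ all of whose incident edges are directed into it, and for that $v$ every component of $T'-v$ has leaf set contained in $A$ or in $B$. Splitting $v$ into $v_A$ and $v_B$ then works, but you must also verify that at least two components are $A$-pure and at least two are $B$-pure --- otherwise the expansion creates a degree-$2$ vertex, and indeed a single $B$-pure component would mean its leaf set is all of $B$, so $A|B$ was already displayed by $T'$, a contradiction. Your uniqueness paragraph has a smaller gap of the same kind: the edge--split bijection is immediate, but you still owe the argument that it preserves incidence of edges at interior vertices; this is routine (e.g.\ by induction, contracting a cherry or a pendant edge on both trees) but is not yet on the page.
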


We will therefore henceforth consider a phylogenetic tree and the corresponding pairwise compatible split set as interchangeable.

In a biological context, sets of incompatible splits frequently arise from data, and biologists wish to quantify the extent to which the set is incompatible. This naturally gave rise to the definition of $k$-compatibility. We say that a split system is \textit{$k$-compatible} if it does not contain a subset of $k+1$ pairwise incompatible splits (for $k \ge 1$). A related concept is that of a \textit{minimal tree set} for a given split system. Given a split system $S$, we say it has a \textit{tree set of size $k$} if one can find a set of $k \ge 1$ phylogenetic trees $\mathcal{T}$ on the same set $X$ such that every split in $S$ is displayed by at least one phylogenetic tree in $\mathcal{T}$. In the case that $S=\mathcal{S}(X)$ we say that $\mathcal{T}$ is a \textit{universal tree set}. Define the function $U(n)$ to be the value of $|\mathcal{T}|$, where $\mathcal{T}$ is a minimal universal tree set on a set of taxa of size $n \ge 2$.

We say that a set of $k$ phylogenetic trees $\mathcal{T}$ that displays every split in a set of splits $S$ is \textit{minimal} with respect to $S$ if there are no sets of $k-1$ phylogenetic trees with this property. An example of a minimal universal tree set for $5$ leaves is shown in Figure \ref{f:MU5}, showing that $U(5) \le 5$. Indeed, due to the fact that each tree on $5$ leaves can display at most two splits, and there are ten unique non-trivial splits on $5$ leaves, we conclude $U(5) \ge 5$ and thus $U(5)=5$.

We note here that $k$-compatibility of a split system and minimal tree set size of a split system are related concepts. Certainly if a split system is $k$-compatible, the minimal tree set size is at least $k$, as given a set of $k$ pairwise incompatible splits each must be displayed by a different phylogenetic tree. Therefore minimal tree set size is bounded below by $k$-compatibility, but they are not the same, as the following example shows. We thank an anonymous reviewer for this example.

\begin{example}
	Let $X= \{1,2,3,4,5\}$ and let $S = \{12|345,23|145,34|125,45|123, 15|234\}$. Then, for instance, $12|345$ and $23|145$ are incompatible, but it is impossible to find a pairwise incompatible subset of size $3$, so $S$ is $2$-compatible. However, as a phylogenetic tree with $5$ leaves can display at most $2$ splits of size $2$, the minimal tree set size of $S$ is $3$.
\end{example}

Finally, we note as an aside that a universal tree set (minimal or otherwise) has no requirement that all phylogenetic trees in the set must be binary. However, given a minimal universal tree set containing a strictly non-binary phylogenetic tree $T$, one can replace $T$ with a binary refinement of $T$ without compromising minimality of the set --- the tree set will still display all splits on $X$. Hence, for a given set $X$ there will always exist a minimal universal tree set on $X$ consisting only of binary phylogenetic trees. Such a minimal universal tree set will generally not be unique, but this will not affect the calculation of $U(n)$.

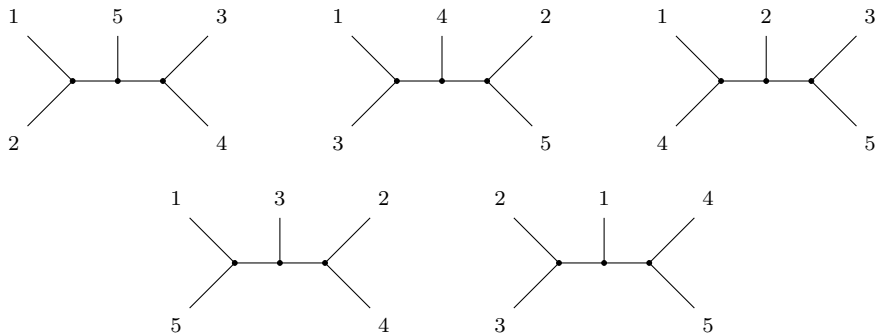
\begin{figure}[h]
	\centering
	\begin{tikzpicture}[scale=0.6]
	\draw (0,0)--(1,1)--(0,2);
	\draw (1,1)--(3,1)--(4,2);
	\draw (3,1)--(4,0);
	\draw (2,1)--(2,2);

	\draw[fill] (1,1) circle [radius=1.5pt];
	\draw[fill] (2,1) circle [radius=1.5pt];
	\draw[fill] (3,1) circle [radius=1.5pt];
	
	\node[below left] at (0,0) {$\mathstrut 2$};
	\node[above left] at (0,2) {$\mathstrut 1$};
	\node[above] at (2,2) {$\mathstrut 5$};
	\node[below right] at (4,0) {$\mathstrut 4$};
	\node[above right] at (4,2) {$\mathstrut 3$};
	\end{tikzpicture}
	\hspace{1cm}
	\begin{tikzpicture}[scale=0.6]
	\draw (0,0)--(1,1)--(0,2);
	\draw (1,1)--(3,1)--(4,2);
	\draw (3,1)--(4,0);
	\draw (2,1)--(2,2);

	\draw[fill] (1,1) circle [radius=1.5pt];
	\draw[fill] (2,1) circle [radius=1.5pt];
	\draw[fill] (3,1) circle [radius=1.5pt];
	
	\node[below left] at (0,0) {$\mathstrut 3$};
	\node[above left] at (0,2) {$\mathstrut 1$};
	\node[above] at (2,2) {$\mathstrut 4$};
	\node[below right] at (4,0) {$\mathstrut 5$};
	\node[above right] at (4,2) {$\mathstrut 2$};
	\end{tikzpicture}
	\hspace{1cm}
	\begin{tikzpicture}[scale=0.6]
	\draw (0,0)--(1,1)--(0,2);
	\draw (1,1)--(3,1)--(4,2);
	\draw (3,1)--(4,0);
	\draw (2,1)--(2,2);

	\draw[fill] (1,1) circle [radius=1.5pt];
	\draw[fill] (2,1) circle [radius=1.5pt];
	\draw[fill] (3,1) circle [radius=1.5pt];
	
	\node[below left] at (0,0) {$\mathstrut 4$};
	\node[above left] at (0,2) {$\mathstrut 1$};
	\node[above] at (2,2) {$\mathstrut 2$};
	\node[below right] at (4,0) {$\mathstrut 5$};
	\node[above right] at (4,2) {$\mathstrut 3$};
	\end{tikzpicture}
	\par\medskip
	\begin{tikzpicture}[scale=0.6]
	\draw (0,0)--(1,1)--(0,2);
	\draw (1,1)--(3,1)--(4,2);
	\draw (3,1)--(4,0);
	\draw (2,1)--(2,2);

	\draw[fill] (1,1) circle [radius=1.5pt];
	\draw[fill] (2,1) circle [radius=1.5pt];
	\draw[fill] (3,1) circle [radius=1.5pt];
	
	\node[below left] at (0,0) {$\mathstrut 5$};
	\node[above left] at (0,2) {$\mathstrut 1$};
	\node[above] at (2,2) {$\mathstrut 3$};
	\node[below right] at (4,0) {$\mathstrut 4$};
	\node[above right] at (4,2) {$\mathstrut 2$};
	\end{tikzpicture}
	\hspace{1cm}
	\begin{tikzpicture}[scale=0.6]
	\draw (0,0)--(1,1)--(0,2);
	\draw (1,1)--(3,1)--(4,2);
	\draw (3,1)--(4,0);
	\draw (2,1)--(2,2);
	
	\draw[fill] (1,1) circle [radius=1.5pt];
	\draw[fill] (2,1) circle [radius=1.5pt];
	\draw[fill] (3,1) circle [radius=1.5pt];
	
	\node[below left] at (0,0) {$\mathstrut 3$};
	\node[above left] at (0,2) {$\mathstrut 2$};
	\node[above] at (2,2) {$\mathstrut 1$};
	\node[below right] at (4,0) {$\mathstrut 5$};
	\node[above right] at (4,2) {$\mathstrut 4$};
	\end{tikzpicture}
	\caption{A minimal universal tree set on the $5$-leaf set $\{ 1,2,3,4,5 \}$.}
	\label{f:MU5}
\end{figure}

\section{Combinatorial Results on Bipartitions}

In order to discover how many trees are required to display all of the splits in a set, we will first present some results on the maximum number of splits of the largest possible size that can be displayed by a given tree. We will address this question for even and odd $n$ separately. 

\begin{lemma}
	\label{l:EvenSplits}
	Let $n$ be an even integer. Then a phylogenetic tree on $n$ leaves has at most one split of size $n/2$.
\end{lemma}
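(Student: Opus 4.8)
The plan is to argue directly from the compatibility condition recalled above. The first observation is that when $n$ is even, a split $A|B$ of size $n/2$ must be perfectly balanced: since $\min(|A|,|B|) = n/2$ and $|A| + |B| = n$, we are forced to have $|A| = |B| = n/2$. So the real content of the lemma is that a single tree cannot display two \emph{distinct} balanced splits.

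Accordingly, I would suppose for contradiction that a phylogenetic tree $T$ on $X$ displays two distinct splits $A|B$ and $C|D$, each of size $n/2$. By the Splits Equivalence Theorem the full collection of splits displayed by $T$ is pairwise compatible, so in particular at least one of the four intersections $A \cap C$, $A \cap D$, $B \cap C$, $B \cap D$ is empty.

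The key step is to show that an empty intersection, together with the balanced sizes, forces the two splits to coincide. Take for concreteness the case $A \cap C = \emptyset$; the remaining three cases are symmetric. Then $A \subseteq X \setminus C = D$, and since $|A| = |D| = n/2$ this containment must in fact be an equality $A = D$. Taking complements gives $B = C$, so $C|D = A|B$, contradicting the assumption that the two splits are distinct. Running the identical argument through each of the four cases shows that any two compatible splits of size $n/2$ must be equal, so $T$ displays at most one such split.

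I do not anticipate a genuine obstacle here. The only point requiring a little care is the bookkeeping of the four symmetric cases, together with the small but essential observation that equal cardinalities upgrade the subset relation $A \subseteq D$ to an equality $A = D$; it is precisely this cardinality constraint (special to the balanced size $n/2$) that makes the argument work and that would fail for splits of smaller size.
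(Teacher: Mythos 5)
Your argument is correct and is essentially the same as the paper's: both invoke pairwise compatibility to get an empty intersection, say $A \cap C = \emptyset$, deduce $A \subseteq D$, and use the equal cardinalities $|A|=|D|=n/2$ to upgrade this to $A = D$, forcing the two splits to coincide. No substantive difference.
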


\begin{proof}
	Let $A|B$ and $C|D$ be a pair of splits displayed by the same phylogenetic tree, of size $n/2$. Then one of $A \cap C, A \cap D, B \cap C$ or $B \cap D$ is empty. Without loss of generality, suppose that $A \cap C$ is empty. Then $A \subseteq (X \backslash C) = D$, but since both partitions are of size $n/2$, it follows that $A=D$, so $B=C$ and thus $A|B$ and $C|D$ are equivalent partitions. Combined with the Splits Equivalence Theorem, the lemma follows. \qed

\end{proof}

Note that of course a phylogenetic tree need not have any such split, as we can consider the star tree --- that is, the tree with only trivial splits --- for any number of leaves $n \ge 4$. This Lemma gives a lower bound for $U(n)$ for even $n$, as phylogenetic trees with an even number of leaves can have at most one split of size $m=\frac{n}{2}$, and $\frac{1}{2} \binom{n}{m}$ is the number of such splits for a given $n$. In fact, $U(n)$ actually equals this lower bound in the even case, as we will see in Theorem \ref{t:MinEven}. 

\begin{lemma}
	\label{l:OddSplits}
	Let $n=2m+1$ where $m \ge 2$ is a positive integer. Then a phylogenetic tree on $n$ leaves displays at most two splits of size $m$.
\end{lemma}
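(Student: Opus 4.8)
The plan is to exploit the key asymmetry that for $n = 2m+1$, a split of size $m$ has a smaller side of size exactly $m$ and a larger side of size exactly $m+1$, so the smaller side is canonically determined. I would write each split of size $m$ as $A_i|B_i$ with $|A_i| = m$ and $|B_i| = m+1$, and study what the compatibility condition forces on a pair of such splits.

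The heart of the argument is a case analysis on the four intersections for two distinct splits $A_1|B_1$ and $A_2|B_2$ of size $m$ displayed by the same tree. By the criterion recalled before the Splits Equivalence Theorem, at least one of $A_1 \cap A_2$, $A_1 \cap B_2$, $B_1 \cap A_2$, $B_1 \cap B_2$ is empty, and since the splits are distinct, exactly one is. I would rule out three of the four cases using cardinalities. If $A_1 \cap B_2 = \emptyset$ then $A_1 \subseteq A_2$, and as $|A_1| = |A_2| = m$ this forces $A_1 = A_2$, hence the same split, a contradiction; the case $B_1 \cap A_2 = \emptyset$ is symmetric. If $B_1 \cap B_2 = \emptyset$ then $B_1 \subseteq A_2$, impossible since $|B_1| = m+1 > m = |A_2|$. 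This leaves only $A_1 \cap A_2 = \emptyset$, so the smaller sides of any two compatible splits of size $m$ must be disjoint.

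With this structural fact in hand, the conclusion is a counting argument. Suppose for contradiction that a tree displays three splits of size $m$, with smaller sides $A_1, A_2, A_3$. By the previous paragraph these are pairwise disjoint sets each of size $m$, so their union has size $3m$ and must fit inside $X$, giving $3m \le n = 2m+1$, i.e. $m \le 1$. This contradicts the hypothesis $m \ge 2$, so at most two splits of size $m$ can be displayed.

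The main obstacle is the case analysis in the middle step: it is essential to use that the smaller side has size exactly $m$ and the larger side size exactly $m+1$ (which is where $n$ being odd matters) to eliminate every configuration except disjoint smaller sides. Once that is established, the final counting bound $3m > 2m+1$ for $m \ge 2$ is immediate.
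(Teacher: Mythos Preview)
Your proof is correct and shares the paper's key observation: for two distinct compatible splits of size $m$ on $2m+1$ taxa, the $m$-element sides must be disjoint. The paper's case analysis is terser (it asserts ``it must be the case that $A\cap C$ is empty'' without spelling out why the other three cases fail), whereas you carry out the elimination explicitly; your version is the more careful one.

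Where the two arguments diverge is in extracting the contradiction for three splits. The paper pins down the structure precisely --- writing $D=A\cup\{x\}$, $C=B\setminus\{x\}$, $F=A\cup\{y\}$, $E=B\setminus\{y\}$ with $x\ne y$ --- and then verifies directly that all four intersections among $C|D$ and $E|F$ are nonempty, contradicting compatibility. You instead observe that three pairwise disjoint $m$-sets cannot fit into $2m+1$ elements when $m\ge 2$. Your counting finish is shorter and more transparent, and it makes the role of the hypothesis $m\ge 2$ visible at a glance; the paper's explicit structural description, on the other hand, gives slightly more information (it shows exactly how two such splits must overlap), which is in the spirit of the structural facts used later in Lemma~\ref{OddMatching}.
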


\begin{proof}
	Seeking a contradiction, let $A|B$, $C|D$ and $E|F$ be three distinct splits on the same phylogenetic tree, so that $|A|=|C|=|E|=m,|B|=|D|=|F|=m+1$. Then one of $A \cap C, A \cap D, B \cap C$ or $B \cap D$ is empty. As $A|B$ and $C|D$ are distinct splits, it must be the case that $A \cap C$ is empty. Therefore, $A \subset D$ and $C \subset B$; in fact, $D = A \cup \{x\}$ for some taxon $x$. This implies that $C=B \backslash \{x\}$.
	
	By similar logic, $F = A \cup \{y\}$ and $E=B \backslash \{y\}$ for some taxon $y$ so that $y \ne x$ (since $E|F$ and $C|D$ are distinct). But then $C \cap E, C \cap F, D \cap E$ and $D \cap F$ must all be non-empty, which is a contradiction. Combined with the Splits Equivalence Theorem, the lemma follows. \qed
\end{proof}

One can observe that in the $n=3$ case there are $3$ such splits (and that a minimal universal tree set on $3$ leaves consists of just the star tree on $3$ leaves). Outside of this case, as each phylogenetic tree with an odd number of leaves can display up to two of these splits, a natural follow-up question is whether there are any obstructions to pairing all such splits in this way. That is, can we partition the largest splits into compatible pairs so that each tree in our set displays a unique pair of splits of size $m$?

Fortunately we (almost) can, using the concept of matchings. We will need two definitions before we can see this.

\begin{definition}
	A \textit{matching} $M$ of a graph $G$ is a set of edges of $G$ such that no two edges share the same vertex. A \textit{defect}-$d$ matching $M$ is a matching so that all except $d$ vertices of $G$ have an incident edge from $M$. Defect-$0$ matchings are also referred to as \textit{perfect} matchings.
\end{definition}

Let $m<n$ and $Bip(n,m)$ be the set of bipartitions of $n$ of size $m$. We intend to construct a graph in which the vertices are the elements of $Bip(n,m)$ and there is an edge between two vertices if and only if they are compatible, distinct bipartitions. We will then find matchings on this graph, with the aim of having as small a defect as possible - thus pairing our large splits as efficiently as possible.

This will require a graph theoretic result courtesy of Little, Grant and Holton, which itself requires an additional definition.

\begin{definition}
	A graph $G$ is said to be \textit{vertex-transitive} if, given any two vertices $v_1$ and $v_2$ of $G$, there is some automorphism $f\colon V(G)\rightarrow V(G)$ such that	$f(v_{1})=v_{2}$ and $f(v_{2})=v_{1}$. 
\end{definition}

We can now state the following theorem.

\begin{theorem}[\cite{little1975defect}]
	\label{t:PerfectMatch}
	Every connected vertex-transitive graph with an even number of vertices has a perfect matching  and every connected vertex-transitive graph with an odd number of vertices has a defect-$1$ matching.
\end{theorem}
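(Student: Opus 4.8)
The plan is to derive this from the Tutte–Berge formula together with the Gallai–Edmonds structure theorem, using vertex-transitivity to force a global dichotomy. Recall that the \emph{deficiency} $\mathrm{def}(G)$, the number of vertices left uncovered by a maximum matching, satisfies $\mathrm{def}(G) = \max_{U \subseteq V(G)} \bigl(o(G-U) - |U|\bigr)$, where $o(H)$ denotes the number of odd-order components of $H$. In this language a perfect matching is the assertion $\mathrm{def}(G) = 0$ and a defect-$1$ matching is $\mathrm{def}(G) = 1$. A parity check gives $o(G-U) \equiv |V(G)| - |U| \pmod 2$, hence $o(G-U) - |U| \equiv |V(G)| \pmod 2$ for every $U$, so $\mathrm{def}(G) \equiv |V(G)| \pmod 2$. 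Thus in the even case $\mathrm{def}(G)$ is even and it suffices to rule out $\mathrm{def}(G) \ge 2$, while in the odd case $\mathrm{def}(G)$ is odd and at least $1$, so it suffices to rule out $\mathrm{def}(G) \ge 3$.

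Next I would bring in the automorphism group. Let $D = D(G)$ be the set of vertices missed by at least one maximum matching of $G$. The key observation is that $D$ is invariant under $\mathrm{Aut}(G)$: any automorphism carries a maximum matching to a maximum matching, so it permutes the missable vertices among themselves. Concretely, if a maximum matching $M$ misses $v$ and $\phi$ is an automorphism with $\phi(v) = w$, then $\phi(M)$ is a maximum matching missing $w$. Since $G$ is vertex-transitive, $\mathrm{Aut}(G)$ acts transitively on $V(G)$ and any invariant set is a union of orbits; as there is a single orbit, $D$ is either $\emptyset$ or all of $V(G)$.

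Finally, the Gallai–Edmonds decomposition identifies this $D$ with the set of inessential vertices and tells us that every component of $G[D]$ is factor-critical, with $\mathrm{def}(G) = c(G[D]) - |A|$, where $A$ is the set of vertices outside $D$ adjacent to $D$ and $c(\cdot)$ counts components. If $D = \emptyset$, no vertex is ever uncovered, so every maximum matching is perfect and $\mathrm{def}(G) = 0$, forcing $|V(G)|$ even. If $D = V(G)$, then $A = \emptyset$ and, since $G$ is connected, $G[D] = G$ is a single factor-critical component, whence $\mathrm{def}(G) = 1$ and $|V(G)|$ is odd. Matching these two cases against the parity of $|V(G)|$ yields the claim exactly: even order gives a perfect matching and odd order gives a defect-$1$ matching.

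The main obstacle is this last step. The easy transitivity argument only delivers the dichotomy $D \in \{\emptyset, V(G)\}$, which by itself shows a maximum matching is either perfect or misses \emph{some} positive number of vertices, but does not pin the odd-case deficiency down to exactly $1$. It is precisely the factor-critical structure supplied by Gallai–Edmonds, together with connectivity forcing $c(G[D]) = 1$, that upgrades ``$G$ has no perfect matching'' to ``$G$ is factor-critical.'' If one wished to avoid quoting Gallai–Edmonds wholesale, the real work would be in reproving directly that a connected vertex-transitive graph with $D = V(G)$ is factor-critical, for instance via an ear decomposition or an alternating-path argument showing the single uncovered vertex can be chosen freely.
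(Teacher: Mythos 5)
The paper does not actually prove this statement --- it is quoted from \cite{little1975defect} as an external black box, so there is no internal proof to compare against. Judged on its own merits, your argument is correct and is essentially the standard modern proof (the one found in Lov\'asz--Plummer's \emph{Matching Theory}): the parity observation $\mathrm{def}(G)\equiv |V(G)| \pmod 2$ from the Tutte--Berge formula, the fact that the Gallai--Edmonds set $D(G)$ of missable vertices is $\mathrm{Aut}(G)$-invariant and hence, by vertex-transitivity, either empty or all of $V(G)$, and then the structure theorem ($A=\emptyset$, $G[D]=G$ connected and factor-critical, so $\mathrm{def}(G)=c(G[D])-|A|=1$) to settle the odd case. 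Your closing remark is the right self-diagnosis: the transitivity argument alone only yields the dichotomy, and the entire force of the odd case rests on the quoted Gallai--Edmonds fact that the components of $G[D]$ are factor-critical; if that theorem is admitted, the proof is complete, and if not, that is where the remaining work lies. One incidental note: the paper's definition of vertex-transitive demands an automorphism swapping $v_1$ and $v_2$, which is stronger than the ordinary transitivity your argument uses, so your proof applies under the paper's hypothesis a fortiori.
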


Hence it suffices to show that our graph $G$ is connected and vertex-transitive when $n=2m+1$. However, Theorem \ref{t:PerfectMatch} differentiates between graphs with even and odd numbers of vertices, and the number of vertices of $G$ is $\binom{n}{m}$, which can be even or odd depending on the values of $m$ and $n$. Therefore we will need to distinguish these cases. This will require a short detour on the properties of the binomial coefficient, using a theorem of Kummer.

Let $a,b$ be positive integers, and assume that $a+b$ has $r$ digits in base $p$. Then we can assume $a + b, a,$ and $b$ all have $r$ digits in base $p$ by adding leading $0$’s if necessary. Denote the $i$-th digit from the right of $a$ and $b$ by $a_i$ and $b_i$ respectively. We define $\phi(1)=0$ if $a_1 + b_1 < p$, and $\phi(1)=1$ otherwise. Then, for $2 \le i \le r$, we define $\phi(i)=0$ if $a_i+b_i + \phi(i-1) < p$ and $\phi(i)=1$ otherwise. Then the \textit{number of carries when adding $a$ and $b$ in base $p$} is the sum

\[\sum_{i=1}^{r} \phi(i).\] 

\begin{example}
	Let $a=15, b=4$, so in base $2$ we have that $a$ is $01111$, $b$ is $00100$ and $a+b$ is $10011$. Then $\phi(1) = 0$ as $1+0 < 2$, $\phi(2)=0$ as $1+0+0 <2$, and similarly, $\phi(3)=1$, $\phi(4)=1$ and $\phi(5)=0$. Hence the number of carries when adding $15$ and $4$ in base $2$ is $0+0+1+1+0=2$.
\end{example}

We can now state Kummer's Theorem.

\begin{theorem}[\cite{kummer1852erganzungssatze}]
	If $p$ is a prime, then the largest power of $p$ that divides $\binom{m+n}{n}$, for $m$ and $n$ non-negative, is the number of carries when $m$ and $n-m$ are added in base $p$.
\end{theorem}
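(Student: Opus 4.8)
The plan is to reduce everything to $p$-adic valuations of factorials, for which Legendre's formula gives a closed form in terms of base-$p$ digit sums, and then to recognise the relevant combination of digit sums as exactly $(p-1)$ times the carry count. Throughout I would write $v_p(k)$ for the exponent of the highest power of $p$ dividing $k$, and $s_p(k)$ for the sum of the base-$p$ digits of a non-negative integer $k$. Matching the binomial to the symmetric form $\binom{a+b}{a}=\binom{a+b}{b}$, it suffices to prove that $v_p\!\left(\binom{a+b}{a}\right)$ equals the number of carries produced when the two summands $a$ and $b$ of the upper index are added in base $p$.

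First I would recall Legendre's formula $v_p(k!)=\sum_{i\ge 1}\lfloor k/p^{i}\rfloor$ and rewrite it in the digit-sum form $v_p(k!)=\tfrac{k-s_p(k)}{p-1}$, which follows from the previous expression by a short telescoping computation using $\sum_{i=1}^{j}p^{\,j-i}=\tfrac{p^{j}-1}{p-1}$. Applying this to the identity $v_p\!\left(\binom{a+b}{a}\right)=v_p((a+b)!)-v_p(a!)-v_p(b!)$, the linear terms $a+b$, $-a$, $-b$ cancel, leaving
\[
v_p\!\left(\binom{a+b}{a}\right)=\frac{s_p(a)+s_p(b)-s_p(a+b)}{p-1}.
\]

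The main obstacle, and the heart of the argument, is to identify the numerator with $(p-1)$ times the number of carries. I would argue directly from the carry recursion $\phi$ already defined in the excerpt. If $a$, $b$ and $a+b$ each have $r$ base-$p$ digits, then the $i$-th digit of the sum satisfies $(a+b)_i=a_i+b_i+\phi(i-1)-p\,\phi(i)$, under the conventions $\phi(0)=0$ and $\phi(r)=0$, the latter holding because $a+b$ has exactly $r$ digits and hence no carry out of the top. Summing over $i=1,\dots,r$ gives $s_p(a+b)=s_p(a)+s_p(b)+\sum_{i=1}^{r}\phi(i-1)-p\sum_{i=1}^{r}\phi(i)$; since $\phi(0)=\phi(r)=0$, both carry sums equal the total number of carries $\kappa=\sum_{i=1}^{r}\phi(i)$, so this collapses to $s_p(a)+s_p(b)-s_p(a+b)=(p-1)\kappa$. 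Substituting into the displayed formula yields $v_p\!\left(\binom{a+b}{a}\right)=\kappa$, as required. The only genuinely delicate point is the bookkeeping with the boundary conventions $\phi(0)=\phi(r)=0$, needed so that the two index-shifted carry sums genuinely coincide; once the per-digit identity $(a+b)_i=a_i+b_i+\phi(i-1)-p\,\phi(i)$ is written down, the remainder is a one-line telescoping cancellation.
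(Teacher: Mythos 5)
The paper offers no proof of this statement --- it is quoted as a classical result of Kummer and used only through Corollary \ref{c:poweroftwo} --- so there is nothing internal to compare your argument against. Your derivation is the standard one via Legendre's formula and it is correct: the digit-sum form $v_p(k!)=\frac{k-s_p(k)}{p-1}$, the cancellation of the linear terms in $v_p((a+b)!)-v_p(a!)-v_p(b!)$, and the telescoping of the per-digit identity $(a+b)_i=a_i+b_i+\phi(i-1)-p\,\phi(i)$ are all handled properly, including the boundary conventions $\phi(0)=\phi(r)=0$ (the latter being forced by the paper's normalisation that $a+b$ has exactly $r$ base-$p$ digits). One point worth making explicit rather than silent: the theorem as printed mixes two standard phrasings, writing the binomial as $\binom{m+n}{n}$ while speaking of the carries when $m$ and $n-m$ are added, which are the summands appropriate to $\binom{n}{m}$. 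You quietly pass to the internally consistent symmetric form --- $v_p\bigl(\binom{a+b}{a}\bigr)$ equals the number of carries in the addition $a+b$ --- which is the correct theorem, is equivalent to the intended statement under the substitution $n\mapsto a+b$, $m\mapsto a$, and is the version actually invoked in Corollary \ref{c:poweroftwo} (there one adds $m$ and $n-m=m+1$ in base $2$ to test the parity of $\binom{n}{m}$). Flagging that repair of the indexing would make the write-up fully watertight; otherwise the proof is complete.
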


As a simple corollary to this, by considering base $2$ we get

\begin{corollary}
	\label{c:poweroftwo}
	If $n = 2m+1$ for positive integer $m$, then $\binom{n}{m}$ is odd if and only if $n$ is one less than a power of $2$.
\end{corollary}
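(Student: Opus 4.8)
The plan is to apply Kummer's Theorem with $p=2$ and carefully analyze the base-$2$ addition of $m$ and $n-m = m+1$, since the corollary asks precisely when $\binom{n}{m} = \binom{2m+1}{m}$ is odd, i.e.\ when the largest power of $2$ dividing it is $2^0$. By Kummer's Theorem, this largest power is the number of carries when adding $m$ and $m+1$ in base $2$, so the corollary is equivalent to the statement that adding $m$ and $m+1$ in base $2$ produces zero carries if and only if $n=2m+1$ is one less than a power of $2$, equivalently $m+1$ is a power of $2$, equivalently $m$ has the form $2^k - 1$.

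First I would set up the binary representations: write $m$ in base $2$ as a string of digits, and observe that $m+1$ is obtained from $m$ by the usual increment operation. The key elementary fact is that adding $m$ and $m+1$ produces no carries precisely when the binary digits of $m$ and $m+1$ never both equal $1$ in the same position (and no cascading carry is generated) --- that is, when the two numbers have disjoint sets of $1$-bits. So the main step is to characterize when $m$ and $m+1$ have disjoint binary supports. I would argue directly: if $m$ ends in exactly $j$ ones (its lowest $j$ bits are $1$ and the $(j+1)$-th bit is $0$), then $m+1$ flips those $j$ trailing ones to zeros and the next $0$ to a $1$. Thus $m+1$ shares no $1$-bit with $m$ exactly when $m$ consists entirely of trailing ones with nothing above them, i.e.\ $m = 2^k - 1$ for some $k$; in that case $m+1 = 2^k$ is a single high bit disjoint from all of $m$'s bits, and the addition $m + (m+1) = 2^{k+1}-1$ genuinely has no carries.

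The one point needing care is the equivalence between ``disjoint supports'' and ``zero carries,'' because a carry could in principle be generated not from two aligned $1$'s but from an incoming carry in a lower position. I would make this airtight by noting that zero carries means each column sum $a_i + b_i + \phi(i-1)$ stays below $2$; since the lowest column with both bits equal to $1$ would already force $\phi=1$, and conversely if the supports are disjoint then every column has $a_i + b_i \le 1$ so no carry is ever initiated or propagated, the two conditions coincide. Translating back, zero carries $\iff m = 2^k-1 \iff m+1 = 2^k \iff n = 2m+1 = 2^{k+1}-1$, which is one less than a power of $2$, completing the corollary.

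I expect the main obstacle to be purely expository rather than mathematical: phrasing the ``trailing ones'' argument cleanly so that the reader sees immediately why $m = 2^k-1$ is forced, without getting bogged down in indexing the digits $a_i,b_i$ and the carry function $\phi(i)$ from the earlier definition. A short, concrete appeal to the binary form of $m$ and $m+1$ (perhaps echoing the worked example with $a=15=2^4-1$, where adding $15$ and $16$ indeed gives no carries) should keep the proof brief and convincing.
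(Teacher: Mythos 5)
Your proof is correct and follows exactly the route the paper intends: the paper states this corollary without proof, merely noting it follows from Kummer's Theorem ``by considering base $2$,'' and your argument supplies precisely the missing details (no carries when adding $m$ and $m+1$ in base $2$ iff their binary supports are disjoint iff $m = 2^k - 1$ iff $n = 2^{k+1}-1$). The careful handling of the equivalence between disjoint supports and zero carries is a welcome addition rather than a deviation.
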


For example, $\binom{7}{3} = 35$ and $\binom{15}{7} = 6435$ correspond to the cases $m=3$ and $m=7$ respectively, and yield odd results as $n$ is one less than a power of $2$.

We can now state our lemma on pairing partitions, which will use the concept of matchings in the proof.

\begin{lemma}
	\label{OddMatching}
	Let $n = 2m+1$, where $m \ge 2$. If $n=2^k-1$ for some integer $k$ then we can partition $Bip(n,m)$ into compatible pairs $\{A|B,C|D\}$with one leftover, otherwise we can partition $Bip(n,m)$ into compatible pairs with no leftovers.
\end{lemma}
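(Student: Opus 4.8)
The plan is to realise the desired pairing as a matching in the compatibility graph $G$ described just before the statement --- whose vertices are the elements of $Bip(n,m)$ and whose edges join distinct compatible bipartitions --- and then to invoke Theorem \ref{t:PerfectMatch}. The first step is to pin down exactly which pairs of size-$m$ splits are adjacent. Writing such a split as $A|B$ with $|A|=m$ and $|B|=m+1$, I would check the four Buneman intersections for two distinct splits $A|B$ and $C|D$: the intersection $B \cap D$ cannot be empty since $|B|+|D| = 2m+2 > n$, while $A \cap D = \emptyset$ forces $A \subseteq C$ and hence $A = C$ (and symmetrically $B \cap C = \emptyset$ forces $C = A$), either of which would make the splits equal. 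Thus for distinct splits, compatibility is equivalent to $A \cap C = \emptyset$, so $G$ is exactly the graph on the $m$-subsets of $X$ in which two subsets are joined precisely when they are disjoint (the odd graph on $n = 2m+1$ points). Identifying each vertex with the size-$m$ side of its split, a matching of $G$ is precisely a collection of compatible pairs.

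It then remains to verify the two hypotheses of Theorem \ref{t:PerfectMatch}: that $G$ is connected and vertex-transitive. For connectivity I would argue that any two $m$-subsets $A$ and $A'$ differing by a single element, say $A' = (A \setminus \{a\}) \cup \{b\}$, have a common neighbour: since $|A \cup A'| = m+1$, the complement $X \setminus (A \cup A')$ is an $m$-subset disjoint from both $A$ and $A'$, giving a path of length two between them. As any two $m$-subsets are linked by a chain of single-element swaps, $G$ is connected. For vertex-transitivity in the swapping sense of the definition, given vertices $A$ and $C$ I would build a permutation of $X$ that fixes $A \cap C$ and $X \setminus (A \cup C)$ pointwise and maps $A \setminus C$ onto $C \setminus A$ by a chosen bijection (with its inverse back); this permutation induces an automorphism of $G$, since permutations preserve disjointness, and by construction it interchanges $A$ and $C$. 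Note this uses $|A \setminus C| = |C \setminus A|$, which is automatic as $|A|=|C|=m$.

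With both hypotheses in hand, the dichotomy in the statement is governed purely by the parity of $|V(G)| = \binom{n}{m}$, which by Corollary \ref{c:poweroftwo} is odd exactly when $n = 2^k - 1$. In that case $G$ has an odd number of vertices, so Theorem \ref{t:PerfectMatch} yields a defect-$1$ matching, pairing all but one element of $Bip(n,m)$ into compatible pairs --- that is, a partition into compatible pairs with one leftover. In every other case $\binom{n}{m}$ is even, $G$ has a perfect matching, and $Bip(n,m)$ partitions into compatible pairs with no leftover.

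I expect the only genuinely delicate point to be matching the paper's unusually strong definition of vertex-transitivity, which demands an automorphism simultaneously swapping the two chosen vertices rather than merely sending one to the other; the explicit bijection above is what secures this, and one should confirm it is a well-defined permutation, since the four blocks $A \cap C$, $A \setminus C$, $C \setminus A$ and $X \setminus (A \cup C)$ partition $X$ and the map respects this partition. Everything else is a direct application of the Buneman compatibility criterion together with the two quoted theorems.
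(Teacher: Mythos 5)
Your proof is correct and follows the same overall skeleton as the paper's: build the compatibility graph on $Bip(n,m)$, verify it is connected and vertex-transitive, invoke Theorem \ref{t:PerfectMatch}, and settle the parity of $\binom{n}{m}$ via Corollary \ref{c:poweroftwo}. Where you differ is in how the two hypotheses are verified, and your route is arguably cleaner. By first proving that for distinct size-$m$ splits compatibility is \emph{equivalent} to disjointness of the $m$-sided parts (the paper never makes this identification explicit), you recognise $G$ as the odd graph $K(2m+1,m)$, which turns connectivity into a two-line argument: $m$-subsets differing by one element share the common neighbour $X \setminus (A \cup A')$, and single-element swaps connect everything. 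The paper instead runs a more hands-on induction, repeatedly modifying $A|B$ to increase the intersection $Int(\cdot\,,C|D)$ with the target split. For vertex-transitivity, your involution that fixes $A \cap C$ and $X \setminus (A \cup C)$ pointwise and swaps $A \setminus C$ with $C \setminus A$ genuinely satisfies the paper's nonstandard definition (which demands $f(v_1)=v_2$ \emph{and} $f(v_2)=v_1$), whereas the paper's own permutation $\sigma$ is only guaranteed to send $A|B$ to $C|D$ and not conversely --- so on this point your argument is actually tighter than the published one. The one step you leave as an assertion (that any two $m$-subsets are linked by a chain of single-element swaps) is standard and easily supplied. Both approaches buy the same conclusion; yours generalises more readily since it isolates the Kneser-graph structure, while the paper's intersection-growing argument avoids needing to characterise adjacency exactly.
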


\begin{proof}
	Let $G$ be the graph whose vertices are the elements of $Bip(n,m)$, and there is an edge between vertices $A|B$ and $C|D$ if and only if they are compatible bipartitions. We first show that $G$ is connected. 
	
	Let $A|B$ and $C|D$ be a pair of splits of size $m$ and let $Int(A|B,C|D) =k$ be the size of the intersection between the partition of $A|B$ with $m+1$ elements and the partition of $C|D$ with $m+1$ elements - without loss of generality supposing they are $A$ and $C$ respectively. Additionally, note $k>1$. We claim that either $A|B$ and $C|D$ coincide (which occurs if and only if $k=m+1$) or there exists a split $E|F$ in the same connected component of $G$ as $A|B$ so that $Int(E|F,C|D)>k$. As $m+1$ is finite, this implies that $A|B$ and $C|D$ must be in the same connected component, and since $A|B$ and $C|D$ were arbitrary, that $G$ is connected. It therefore remains to be shown that there exists a split $E|F$ in the same connected component of $G$ as $A|B$ so that $Int(E|F,C|D)>k$.
	
	Suppose $A|B$ and $C|D$ do not coincide and consider the split $A'|B'$ obtained by taking some element $x \in (A \backslash C)$ (which must exist as both $A$ and $C$ have size $m+1$ and do not coincide), and letting $A'=A \backslash \{x\}$ and $B'=B \cup \{x\}$. Note that in this case, $B'$ is now the partition of size $m+1$, and further that $A|B$ and $A'|B'$ are compatible as $B \cap A'$ is empty. Hence $A|B$ and $A'|B'$ are in the same connected component of $G$ (indeed, there is an edge between them).
	
	If $Int(A'|B',C|D)>k$, then the claim is proven by taking $E|F=A'|B'$. Otherwise, $Int(A'|B',C|D)= k' \le k < m+1$, and so $A'|B'$ and $C|D$ do not coincide. Of course, $k' > 0$ as both $B'$ and $C$ again have size $m+1$ and do not coincide. We can therefore take some element $y \in B \cap C$ and form the new split $A''|B''$ (also of size $m$), where $A'' = A' \cup \{y \}$ and $B'' = B' \backslash \{y\}$. Note that $A'|B'$ and $A''|B''$ are compatible as $A' \cap B''$ is empty, and therefore $A|B,A'|B'$ and $A''|B''$ are all in the same connected component.
	
	Then $Int(A''|B'',C|D)=k+1$ as to form $A''$ we removed an element from $A$ that was not in $C$, and then added an element that was in $C$. By taking $E|F = A''|B''$ the claim is therefore proven. Hence $G$ is connected.

	We will now show that $G$ is vertex-transitive. 
	
	Let $A|B$ and $C|D$ be two vertices of $G$, where $|A|=|C|=m, |B|=|D|=m+1$. Let $\sigma$ be a permutation of $X$ so that $\sigma$ applied to each taxon in $A$ obtains $C$, and similarly applied to $B$ obtains $D$. Then the induced action by applying this permutation to every bipartition in $G$ is an automorphism that maps $A|B$ to $C|D$, hence $G$ is vertex-transitive.
	
	By Theorem \ref{t:PerfectMatch}, as $G$ is connected and vertex-transitive, if $G$ has an even number of vertices there exists a perfect matching  and if $G$ has an odd number of vertices there exists a defect-$1$ matching. As there is an edge between vertices $A|B$ and $C|D$ if and only if they are compatible bipartitions, the existence of a perfect matching is equivalent to the existence of a partitioning of the vertices into compatible pairs of splits. Similarly, the existence of a defect-$1$ matching is equivalent to the existence of a partitioning of the vertices into compatible pairs of splits with one leftover split.

	Finally, the number of vertices of $G$ is $\binom{n}{m}$, and Corollary \ref{c:poweroftwo} of Kummer's Theorem tells us exactly when this value is odd and when it is even. The claim follows. \qed
\end{proof}
	
\begin{example}
	Let $X = \{1,2,3,4,5\}$, so $n=5$ and $m=2$. Then we can partition $Bip(5,2)$ into compatible pairs, for example $\{12|345,34|125\}$,$\{13|245,25|134\}$,$\{14|235,35|124\}$,$\{15|234,24|135\}$ and $\{23|145,45|123\}$. Note that each pair is precisely the set of non-trivial bipartitions corresponding to a unique tree from Figure \ref{f:MU5}.
\end{example}

\section{Minimal Universal Tree Sets}

We will shortly prove the main theorem of this paper, Theorem \ref{t:MinEven}, pending some useful theorems. The statement of the theorem requires the following definition.

\begin{definition}
	Let $x$ be a real number. Then the ceiling of $x$, denoted by $\left\lceil x \right\rceil$ is the smallest integer $n$ so that $n \ge x$. The floor of $x$, denoted by  $\floor{x}$, is the largest integer $n$ so that $n \le x$.
\end{definition}

\begin{theorem}[Main Theorem]
	\label{t:MinEven}
	Let $X$ be a set of size $n \ge 2$, and let $m$ be a positive integer such that $n=2m$ if $n$ is even and $n=2m+1$ if $n$ is odd. Then a minimal universal tree set for $X$ has size 
	
	\[\displaystyle  U(n) = \left\lceil\frac{1}{2} \binom{n}{m} \right\rceil. \] 
\end{theorem}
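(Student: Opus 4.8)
The plan is to prove the theorem by establishing matching lower and upper bounds on $U(n)$, splitting the argument into the even and odd cases according to the parity of $n$, since the structure of the largest splits differs between them. The central observation is that the splits of maximum size are the bottleneck: a universal tree set must display every split, but each tree can display only very few of the largest splits, so counting these forces a lower bound, while an explicit efficient packing of the largest splits into trees gives the matching upper bound.

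I would treat the even case $n=2m$ first. Here the number of distinct splits of size $m$ is $\frac{1}{2}\binom{n}{m}$ (each unordered bipartition $A|B$ counted once). By Lemma~\ref{l:EvenSplits}, any single phylogenetic tree displays at most one split of size $m$, so a universal tree set needs at least $\frac{1}{2}\binom{n}{m}$ trees, giving the lower bound $U(n)\ge \frac{1}{2}\binom{n}{m}=\lceil \frac{1}{2}\binom{n}{m}\rceil$ (the quantity is already an integer). For the upper bound I would construct a universal tree set of exactly this size: assign each of the $\frac{1}{2}\binom{n}{m}$ largest splits to its own tree, and then argue that every split of smaller size can be accommodated by refining these trees so that all smaller splits are collectively displayed. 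The key point is that a binary tree on $n$ leaves has $n-3$ internal edges, hence can carry many splits simultaneously; one must verify that the total pool of non-maximal splits can be distributed among the $\frac{1}{2}\binom{n}{m}$ trees while keeping each tree's split set pairwise compatible. I would argue this by noting each maximal split $A|B$ can be extended to a full binary tree (a caterpillar, say) displaying $A|B$ together with a chosen chain of nested smaller splits, and that one can cover all smaller splits this way since they are far more flexible and $\frac{1}{2}\binom{n}{m}$ trees provide ample capacity.

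For the odd case $n=2m+1$, the target value is $\lceil \frac{1}{2}\binom{n}{m}\rceil$, and here the ceiling genuinely matters because $\binom{n}{m}$ can be odd (precisely when $n=2^k-1$, by Corollary~\ref{c:poweroftwo}). By Lemma~\ref{l:OddSplits}, each tree displays at most two splits of size $m$, and there are $\binom{n}{m}$ such splits, giving the lower bound $U(n)\ge \lceil \frac{1}{2}\binom{n}{m}\rceil$. The upper bound follows from Lemma~\ref{OddMatching}: when $n\ne 2^k-1$ we partition the $\binom{n}{m}$ largest splits into $\frac{1}{2}\binom{n}{m}$ compatible pairs, each pair determining a tree, and when $n=2^k-1$ we obtain $\frac{1}{2}(\binom{n}{m}-1)$ compatible pairs plus one leftover split, needing one extra tree for a total of $\lceil \frac{1}{2}\binom{n}{m}\rceil$ trees. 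As in the even case, I would then argue each of these trees can be refined to additionally display the smaller splits.

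The main obstacle I anticipate is the upper-bound argument that the smaller splits can all be absorbed into the trees already dedicated to the maximal splits, without exceeding the count $\lceil \frac{1}{2}\binom{n}{m}\rceil$. The lower bounds and the pairing of maximal splits are essentially immediate from the lemmas, but verifying that no smaller split is ever stranded requires showing that the trees built around the maximal splits have enough free internal edges and enough combinatorial freedom to cover $\mathcal{S}(X)$ entirely. I expect the cleanest route is to observe that any split $C|D$ of size $s<m$ is compatible with at least one maximal split (indeed it refines or is refined by many of them), so $C|D$ can be attached to the tree built around a suitably chosen compatible maximal split; a careful accounting, or an appeal to the fact that the maximal splits already ``dominate'' the smaller ones in the compatibility order, should close the gap.
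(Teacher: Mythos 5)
Your lower bounds are exactly right and coincide with the paper's: Lemma~\ref{l:EvenSplits} forces at least $\frac{1}{2}\binom{n}{m}$ trees when $n=2m$, and Lemma~\ref{l:OddSplits} forces at least $\bigl\lceil\frac{1}{2}\binom{n}{m}\bigr\rceil$ trees when $n=2m+1$. The gap is in the upper bound, and you have correctly identified where it is but not closed it. Your plan --- dedicate one tree to each maximal split (or compatible pair of maximal splits, via Lemma~\ref{OddMatching}) and then ``absorb'' every smaller split into some tree built around a compatible maximal split --- fails as stated for two reasons. First, two smaller splits that are each compatible with the same maximal split $A|B$ need not be compatible with each other: take $C,C'\subseteq A$ with $C\cap C'$, $C\setminus C'$, and $C'\setminus C$ all non-empty; then $C|(X\setminus C)$ and $C'|(X\setminus C')$ are incompatible yet both refine $A|B$. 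So assigning smaller splits tree-by-tree requires keeping each tree's entire split set pairwise compatible, not just compatible with its maximal split, and your sketch gives no mechanism for this. Second, even granting compatibility, you give no argument that \emph{every} smaller split lands in some tree; ``ample capacity'' is a counting heuristic, not a covering argument.

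The paper closes both holes simultaneously with a single structural device you do not have: apply Sperner's theorem and Dilworth's theorem to the subset poset $P(X)$ to partition it into exactly $\binom{n}{m}$ chains, each necessarily containing exactly one subset of size $m$. Restricting to subsets of size at most $m$, each chain is a nested family of sets, and nested families yield pairwise compatible splits; moreover the chains exhaust all subsets, so every split of size at most $m$ is automatically covered. One then merges the chain of $A$ with the chain of $X\setminus A$ in the even case (both map to the same split of size $m$), or merges chains in matched pairs via Lemma~\ref{OddMatching} in the odd case (checking that compatibility of the two maximal splits $A|B$ and $A'|B'$ propagates to the splits nested beneath them, since $A\cap A'=\emptyset$ forces the small sides to be disjoint). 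This produces exactly $\bigl\lceil\frac{1}{2}\binom{n}{m}\bigr\rceil$ pairwise compatible split systems whose union is all of $\mathcal{S}(X)$, which is the construction your proposal is missing. Without Dilworth's chain decomposition or an equivalent explicit covering of $\mathcal{S}(X)$ by nested families, the upper bound is not established.
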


To prove this, we will need a few useful classical theorems from extremal set theory. A \textit{poset} is a set $P$ together with a binary relation $\le$ on its elements that is reflexive ($x \le x$ for all $x \in P$), antisymmetric (if $x \le y$ and $y \le x$ then $x=y$ for all pairs $x,y \in P$) and transitive (if $x \le y$ and $y \le z$ then $x \le z$ for all triples $x,y,z \in P$). Define $P(X)$ to be the poset on the power set of $X$, ordered by set inclusion. In particular we need a theorem of Sperner and a theorem of Dilworth, which we will use to partition $P(X)$ into chains, which is necessary for constructing sets of compatible splits.

\begin{definition}
	Let $P$ be a poset with a reflexive, antisymmetric and transitive binary relation $\le$ on its elements. Two elements $x$ and $y$ of $P$ are said to be \textit{comparable} if either $x \le y$ or $y \le x$. We call a subset $S$ of $P$ a \textit{chain} if any two of its elements are comparable, and an \textit{antichain} if no distinct pair of its elements are comparable.
\end{definition}

Note that for our example, $P(X)$, the power set of $X$ with the binary relation of set inclusion, a chain is a set $S$ of sets in $P(X)$ so that for any pair of sets $A,B$ in $S$, either $A$ is contained in $B$ or $B$ is contained in $A$. An antichain in our example is a set $S$ of sets in $P(X)$ so that for any pair of distinct sets $A$ and $B$, neither is contained in the other.

\begin{theorem}[\cite{sperner1928satz}]
	Let $X$ be a set of size $n$. Then the largest antichain in $P(X)$ has size $\binom{n}{\floor{n/2}}$.
\end{theorem}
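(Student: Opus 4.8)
The plan is to establish matching upper and lower bounds. The lower bound is immediate: the collection of all $\floor{n/2}$-element subsets of $X$ forms an antichain, since no set properly contains another set of the same cardinality, and it has exactly $\binom{n}{\floor{n/2}}$ members. It therefore suffices to show that no antichain can be strictly larger, and I would do this by Lubell's double-counting argument over maximal chains.

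Call a \emph{maximal chain} in $P(X)$ a sequence $\emptyset = A_0 \subset A_1 \subset \cdots \subset A_n = X$ with $|A_i| = i$. Such a chain is determined by the order in which the $n$ elements of $X$ are adjoined, so there are exactly $n!$ of them. The first key step is the observation that, because the sets in a chain are pairwise comparable while the members of an antichain are pairwise incomparable, any fixed maximal chain contains \emph{at most one} element of a given antichain $\mathcal{A}$. The second ingredient is a count of how many maximal chains pass through a fixed set $A$ with $|A| = k$: one independently orders the $k$ elements inside $A$ and the $n-k$ elements outside $A$, giving $k!\,(n-k)!$ chains through $A$.

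Combining these, I would count incidences between maximal chains and members of $\mathcal{A}$. Each member $A$ lies on $|A|!\,(n-|A|)!$ chains, and no chain is counted more than once, so summing over $\mathcal{A}$ gives $\sum_{A \in \mathcal{A}} |A|!\,(n-|A|)! \le n!$. Dividing through by $n!$ yields the LYM inequality $\sum_{A \in \mathcal{A}} \binom{n}{|A|}^{-1} \le 1$. Since $\binom{n}{k}$ is maximised at $k = \floor{n/2}$, every summand is at least $\binom{n}{\floor{n/2}}^{-1}$, so $|\mathcal{A}| \cdot \binom{n}{\floor{n/2}}^{-1} \le 1$, giving $|\mathcal{A}| \le \binom{n}{\floor{n/2}}$, as required.

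The argument has no serious obstacle once one commits to counting maximal chains rather than reasoning about the antichain directly; that change of viewpoint is essentially the only idea. The points needing care are the two combinatorial claims, namely that a chain meets an antichain at most once (which is just incomparability) and that exactly $k!\,(n-k)!$ maximal chains pass through a fixed $k$-set, together with the elementary fact that the binomial coefficient peaks at the middle layer. As an alternative consistent with the chain-decomposition machinery the paper is about to invoke, one could instead exhibit a partition of $P(X)$ into $\binom{n}{\floor{n/2}}$ chains (a symmetric chain decomposition): since an antichain meets each chain at most once, its size is bounded by the number of chains, and Dilworth's theorem packages this antichain/chain duality cleanly.
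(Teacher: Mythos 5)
The paper states Sperner's theorem as a cited classical result and gives no proof of its own, so there is nothing internal to compare against; your argument stands on its own and is correct. It is the standard Lubell/LYM proof: the lower bound from the middle layer, the count of $n!$ maximal chains with $k!\,(n-k)!$ of them through any fixed $k$-set, the observation that a chain meets an antichain at most once, and the resulting inequality $\sum_{A\in\mathcal{A}}\binom{n}{|A|}^{-1}\le 1$ combined with the fact that $\binom{n}{k}$ peaks at $k=\floor{n/2}$. All of these steps are sound. One small caution on your closing aside: Dilworth's theorem alone does not hand you a partition of $P(X)$ into $\binom{n}{\floor{n/2}}$ chains until you already know the maximum antichain has that size (which is the thing being proved), so the chain-decomposition route genuinely requires constructing a symmetric chain decomposition explicitly (de Bruijn--Tengbergen--Kruyswijk); as written that alternative is circular, but it does not affect your main LYM argument.
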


\begin{theorem}[\cite{dilworth1950decomposition}]
	Let $P$ be a poset and suppose the largest antichain in $P$ has size $r$. Then $P$ can be partitioned into $r$ chains.
\end{theorem}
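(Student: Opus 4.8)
The plan is to prove the two inequalities separately. Write $r$ for the size of a largest antichain and let $c$ denote the minimum number of chains needed to cover $P$. The inequality $c \ge r$ is immediate: any chain and any antichain share at most one element (two elements of a chain are comparable, two distinct elements of an antichain are not), so covering the $r$ elements of a maximum antichain already forces at least $r$ distinct chains. The real content is the reverse inequality $c \le r$, that is, the existence of a partition into exactly $r$ chains, which I would establish by induction on $|P|$.

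For the inductive step, fix a maximal element $a \in P$ and set $P' = P \setminus \{a\}$, whose largest antichain has some size $r'$ with $r' \in \{r-1, r\}$ (removing one element lowers the antichain number by at most one, and $P' \subseteq P$ bounds it above). If $r' = r - 1$, the inductive hypothesis covers $P'$ by $r-1$ chains and adjoining the singleton $\{a\}$ yields $r$ chains. The substantive case is $r' = r$: here induction gives a partition $P' = C_1 \sqcup \cdots \sqcup C_r$ into $r$ chains, and since every maximum antichain of $P'$ has $r$ elements spread over the $r$ chains with at most one per chain, it meets each $C_i$ in exactly one element. Thus each $C_i$ contains at least one element lying in some maximum antichain, and for each $i$ I let $a_i$ be the \emph{largest} such element of $C_i$.

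The main obstacle is to show that $A = \{a_1, \dots, a_r\}$ is itself an antichain. I would prove this via the ``chainwise join'' of two maximum antichains $K, K'$ of $P'$: for each $i$, replace the pair $K \cap C_i$, $K' \cap C_i$ by whichever is larger in the chain $C_i$. A short case analysis (comparing two such representatives and tracing which of $K$, $K'$ supplied each) shows this new set of $r$ representatives is again an antichain, hence maximum. Applying this to antichains $K \ni a_i$ and $K' \ni a_j$ produces a maximum antichain whose representatives in $C_i$ and $C_j$ are, by the maximality built into the definitions of $a_i$ and $a_j$, exactly $a_i$ and $a_j$ themselves; lying in a common antichain, they are incomparable, and since $i, j$ were arbitrary, $A$ is an antichain.

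With $A$ in hand the argument closes quickly. Being a size-$r$ antichain, $A$ is maximum in $P$, so $A \cup \{a\}$ cannot be an antichain; as $a$ is maximal this forces $a > a_k$ for some $k$. I would then peel off the chain $D = \{x \in C_k : x \le a_k\} \cup \{a\}$ and consider $R = P \setminus D$. If $R$ contained an antichain of size $r$, it would be a maximum antichain of $P'$ meeting $C_k$ strictly above $a_k$, contradicting the maximality in the definition of $a_k$; hence the largest antichain of $R$ has size at most $r-1$. The inductive hypothesis then covers $R$ by $r-1$ chains, which together with $D$ gives the desired cover of $P$ by $r$ chains. As anticipated, the delicate point throughout is the antichain claim for $A$; once the chainwise-join lemma is in place, the surrounding bookkeeping is routine.
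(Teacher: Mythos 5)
Your proof is correct, but there is no internal proof to compare it against: the paper quotes Dilworth's theorem as a classical cited result and uses it as a black box (its only role is to partition the finite poset $P(X)$ into $\binom{n}{\lfloor n/2\rfloor}$ chains), so any complete proof is by definition a different route from the paper's. What you have written is essentially Galvin's short inductive proof: remove a maximal element $a$, split on whether the width drops, and in the hard case take from each chain $C_i$ the largest element $a_i$ lying in some maximum antichain, show $A=\{a_1,\dots,a_r\}$ is an antichain, and peel off the chain $D=\{x\in C_k : x\le a_k\}\cup\{a\}$. Your one genuine deviation is how you establish the antichain claim for $A$: the textbook version argues directly (if $a_i < a_j$, the maximum antichain through $a_j$ meets $C_i$ in some $y\le a_i < a_j$, contradicting incomparability inside that antichain), whereas you route through a chainwise-join lemma for maximum antichains. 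That lemma is correct --- if the join had comparable representatives $m_i<m_j$ and $m_j$ was supplied by $K$, then $K$'s representative in $C_i$ is $\le m_i<m_j$, contradicting that $K$ is an antichain --- and it is a nice observation (it is the elementary germ of the fact that maximum antichains form a lattice), though when applied to $K\ni a_i$ and $K'\ni a_j$ it collapses to the direct argument, so it buys generality at the cost of an extra definition. Two minor points worth making explicit: your induction on $|P|$ proves the theorem for \emph{finite} posets, which is tacitly what the paper's statement means and all it needs (for infinite posets of finite width an additional compactness step is required); and in the final step the inductive hypothesis covers $R$ by at most $r-1$ chains, which together with your easy inequality $c\ge r$ forces the width of $R$ to be exactly $r-1$ and the resulting partition of $P$ to have exactly $r$ chains.
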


We are now ready for the proof, which we will divide into even and odd cases.

\begin{proof}[Proof for Theorem \ref{t:MinEven} when $n$ is even]
	Let $n=2m$ and consider the poset $P(X)$. Sperner's Theorem states that the largest antichain of $P(X)$ has size $\binom{n}{m}$, and Dilworth's Theorem implies that we can therefore partition $P(X)$ into $\binom{n}{m}$ chains. Certainly no subset of size $m$ can be contained in a distinct subset of size $m$, so each chain contains at most one subset of size $m$. As there are exactly $\binom{n}{m}$ such subsets, each chain must therefore contain exactly one subset of size $m$.
	
	Select any such partition into chains and consider the graph $Chain(X)$ in which the vertices are the non-empty subsets in $P(X)$ of size $m$ or less, and there is an edge $e=(U,V)$  if and only if
	
	\begin{enumerate} 
		\item $U \le V$ in $P(X)$ and there is no set $W \in P(X)$ where $W \ne U,V$ such that $U \le W \le V$; and 
		\item $U$ and $V$ are elements of the same chain.
	\end{enumerate}

	Let 
	\[\gamma: V(Chain(X)) \ra \mathcal{S}(X)\] 
	be the function that maps the subset $A$ to the bipartition $A|(X \backslash A)$. Note that $\gamma$ is not injective, and $\gamma(A)=\gamma(B)$ if and only if $A=X-B$. Finally, define $BipChain(X)$ to be the graph consisting of vertices $\gamma(V(Chain(X)))$ and an edge $e=(\gamma(U),\gamma(V))$ if and only if $(U,V) \in E(Chain(X))$.
	
	In particular, as $\gamma(A)=\gamma(B)$ if and only if $A=X-B$, which occurs precisely when $|A|=m$. Thus $BipChain(X)$ has exactly half the number of components that $Chain(X)$ does, as for each pair $A,B$ such that $A=X-B$, the chain in $Chain(X)$ containing $A$ and the chain in $Chain(X)$ containing $B$ are mapped by $\gamma$ to the same component in $BipChain(X)$.
	
	Hence, the number of components of $\gamma(Chain(X))$ will be \[\displaystyle k = \frac{1}{2} \binom{n}{m}. \] 
	
	We construct a universal tree set as follows. Denote the components of $\gamma(Chain(X))$ by $C_1,...,C_k$.

	We claim that the set of phylogenetic trees corresponding to the sets of bipartitions $V(C_1),...,V(C_k)$ via the Splits Equivalence theorem, is a universal tree set, in particular that all $V(C_i)$ are sets of compatible bipartitions.
	
	First let the unique bipartition of size $m$ in $V(C_i)$ be $A|B$. Suppose we have two distinct bipartitions, $C|D$ and $C'|D'$. In order to show compatibility, it suffices (but is not necessary) to show that one of $C$ or $D$ is contained in one of $C'$ or $D'$, or the reverse, as the inclusion requirement quickly holds.
	
	If $C'|D' = A|B$, then certainly $C$ or $D$ is a subset of $A$ or $B$ by the chain construction. We therefore assume neither bipartition is $A|B$, and without loss of generality that $|C| < |D|$ and $|C'| < |D'|$. Then,  either $C$ and $C'$ are subsets of the same set ($A$ or $B$), or one is a subset of $A$ and the other of $B$. If $C$ and $C'$ are subsets of the same set, then $C$ and $C'$ are from the same component of $Chain(X)$ and therefore $C \subseteq C'$ or the reverse. However, if they are subsets of different sets, then $C \subset X \backslash C' = D'$. Therefore in all cases $V(C_i)$ is a set of compatible bipartitions, and as every bipartition is present in some $V(C_k)$ the tree set corresponding to $V(C_1),...,V(C_k)$ via the Splits Equivalence theorem is a universal tree set. 
	
	It finally remains to confirm that 
	\[\displaystyle k = \left\lceil\frac{1}{2} \binom{n}{m} \right\rceil. \]  
	is the minimum possible value. However, as each phylogenetic tree in any universal tree set can contain at most one split of size $m$ by Lemma \ref{l:EvenSplits} (of which there are $\binom{n}{m}/2$, which is equal to the desired formula when $n$ is even), the set is minimal. The theorem follows. \qed
\end{proof}

The proof for odd $n$ proceeds similarly, with some small modifications.

\begin{proof}[Proof for Theorem \ref{t:MinEven} when $n$ is odd]
	Let $n=2m+1$ and consider the poset $P(X)$. Sperner's Theorem states that the largest antichain of $P(X)$ has size $\binom{n}{m}$, and Dilworth's Theorem implies that we can therefore partition $P(X)$ into $\binom{n}{m}$ chains, each of which contains one subset of size $m$, by a similar counting argument to the even case.
	
	Select any such partition into chains and consider the graph $Chain(X)$ in which the vertices are the non-empty subsets in $P(X)$ of size $m$ or less, and there is an edge $e=(U,V)$  if and only if

\begin{enumerate} 
	\item $U \le V$ in $P(X)$ and there is no set $W \in P(X)$ where $W \ne U,V$ such that $U \le W \le V$; and
	\item $U$ and $V$ are elements of the same chain.
\end{enumerate}

	Let 
	\[\gamma: V(Chain(X)) \ra \mathcal{S}(X)\] 
	be the function that maps the subset $A$ to the bipartition $A|(X \backslash A)$. Note that in this case $\gamma$ \textit{is} injective, as $X-A$ must have a size larger than $m$.
	
	 We now perform one additional modification. By Lemma \ref{OddMatching}, there exists a matching $M$ between the bipartitions of the form $A|B$ where $|A|=m, |B|=m+1$ with at most one unpaired bipartition. For each of the
	 	 \[\displaystyle \left\lfloor\frac{1}{2} \binom{n}{m} \right\rfloor  \]
	  such matchings $(A|B,C|D)$ in $M$, add this edge to $\gamma(Chain(X))$, and call the resulting graph $BipChain(X)$.
	 
	 As $\gamma(Chain(X))$ contained $\binom{n}{m}$ connected components and each additional edge reduced the number of components by one, the number of connected components of $BipChain(X)$ will therefore be 
	 \[\displaystyle k = \binom{n}{m} - \left\lfloor\frac{1}{2} \binom{n}{m} \right\rfloor = \left\lceil\frac{1}{2} \binom{n}{m} \right\rceil . \]
	 
	 We construct a universal tree set as follows. Denote the components of $\gamma(Chain(X))$ by $C_1,...,C_k$.
	 
	 We claim that the set of phylogenetic trees corresponding to the sets of bipartitions $V(C_1),...,V(C_k)$ via the Splits Equivalence theorem, is a universal tree set, in particular that all $V(C_i)$ are sets of compatible bipartitions.
	 
	 Let the two (distinct) splits of size $m$ in $V(C_i)$ be $A|B$ and $A'|B'$, and let $C|D$ and $C'|D'$ be any two distinct bipartitions in $V(C_i)$. If $C|D$ and $C'|D'$ are in the same component in $Chain(X)$, the options proceed analogously to the even case, and $C|D$ is compatible with $C'|D'$. 
	 
	 Therefore instead suppose that $C|D$ was in the same component of $Chain(X)$ as $A|B$ and $C'|D'$ was in the same component of $Chain(X)$ as $A'|B'$. In particular, without loss of generality suppose that $C \subseteq A$ and $C' \subseteq A'$.
	 
	 As $A|B$ and $A'|B'$ are compatible, one of $A \cap A', A \cap B', B \cap A'$ or $B \cap B'$ are empty, but as $B$ and $B'$ have size $m+1$ the only possibility is that $A \cap A'$ is empty. Then as $C \subseteq A$ and $C' \subseteq A'$ it follows that $C \cap C'$ is empty, so $C|D$ and $C'|D'$ are compatible. 
	 
	 It finally remains to confirm that 
	 \[\displaystyle k = \left\lceil\frac{1}{2} \binom{n}{m} \right\rceil. \]  
	 is the minimum possible value. By Lemma \ref{l:OddSplits} in each phylogenetic tree in a universal tree set there can be at most two splits of size $m$, of which there are $\binom{n}{m}$, implying that 	 
	 \[\displaystyle k = \left\lceil\frac{1}{2} \binom{n}{m} \right\rceil \]  
	 is a lower bound (noting that the binomial coefficient can be odd, hence the ceiling function). \qed
\end{proof}

We note here that for $U(n)$, the associated integer sequence  ($3,5,10,18,35,63...$) appears in the OEIS as sequence A002661 for $n \ge 4$ \citep{OEISseq}.

\section{Applications and Discussion}

A split network is a combinatorial generalisation of a tree in which sets of edges are associated with bipartitions instead of just a single edge \citep{huson2006application}. Due to this, each split network is associated with a set of splits (which are not necessarily compatible) and so give us a rich source of split systems for which we can find minimal tree sets.

We now consider two SplitsTree-generated \citep{huson2006application} split networks derived from archaeal genomes and depicted in Figure \ref{f:archaea}. The first network $N_1$ contains 13 taxa, and the second network, $N_2$ is the network obtained after removal of the single taxon \textit{Methanococcus maripaludis}, leaving $12$ taxa.

To generate these networks, $39$ universal archaeal protein families gathered by \cite{nelson2015origins}, were used for a BLAST search against archaeal genomes obtained from the RefSeq 2016 database \citep{o2016reference} with an identity threshold of 20\% and an e-value cut-off of $10^{-5}$. For each best hit, alignments were generated using MAFFT v7.299b (linsi) \citep{katoh2002mafft} and concatenated using an in-house Python script. These concatenated alignments were used to draw a Neighbor-Net using SplitsTree4. We note here that all splits represented by the data are also present in the networks shown in Figure \ref{f:archaea}. The source files are available as supplementary data.

Using a short Python program that we have made available online \citep{Hendriksen2020}, we analysed the splits corresponding to each network and found a set of $4$ phylogenetic trees that display all splits in $N_1$, and a set of $3$ phylogenetic trees that display all splits in $N_2$. We note that the splits were analysed from the source file, so necessarily included all splits indicated by the data. These were then shown to be minimal by hand - both networks display $3$ incompatible splits of size $6$, and $N_1$ additionally displays a partition $A|B$ of size $5$ that is incompatible with each of the first $3$. Therefore, if we denote the minimal tree set size of the split system associated with a network $N$ by $\kappa(N)$, we know $\kappa(N_1)=4$ and $\kappa(N_2)=3$. It was observed that there are several possible minimal tree sets that can be computed for the set of splits corresponding to $N_1$ and $N_2$, but this of course does not affect the values of $\kappa(N_1)$ and $\kappa(N_2)$.

If we denote the number of leaves of a network $N$ by $|N|$, then we can define the \textit{normalised tree set size} $Norm(N)$ to be 

\[Norm(N)=\frac{\kappa(N)}{U(|N|)}.\]

Now, as $U(13)=858$ and $U(12)=462$, we can normalise these minimal tree set sizes, and as

\[Norm(N_1) = \frac{4}{858} < Norm(N_2)= \frac{3}{462}, \]
from the perspective relative to universal incompatibility, $N_2$ is `more incompatible' than $N_1$.

Although the underlying dataset consists of Archaeal proteins, which are known to have lateral gene transfer events \citep{nelson2015origins} the specific proteins which are used here are mainly ribosomal proteins. Ribosomal subunits are involved in the cellular process of translation. It is known that they are very conserved proteins across all life forms \citep{ban2014new}. We deliberately made the choice to examine Archaea given the lateral gene transfer events, so that we could expect some discordance, but selected ribosomal proteins to limit the extent thereof for ease of analysis.

The organism which is removed from $N_1$ to obtain $N_2$, \textit{Methanococcus maripaludi}, is a fully-sequenced model organism among hydrogenotrophic methanogens \citep{goyal2016metabolic}, and is the only member of the genus \textit{Methanococcus} in our dataset.

As the only organism of the genus \textit{Methanococcus}, we would expect the \textit{Methanococcus} to be evolutionarily distinct from the remaining organisms. We would therefore predict that it would contribute proportionately less to the incompatibility of the data with respect to the remaining twelve organisms of the dataset, so the result that $N_2$ is relatively more incompatible than $N_1$ is as anticipated.

Mathematically speaking, there are several natural extensions to the problem of minimal universal tree sets for future research. For instance, one avenue could be to investigate how the value changes if we instead ask for a minimal universal set of networks with at most $k$ reticulations. The probabilistic analogue of the question would also be interesting - how likely is it, given a set of $k$ trees, to have a universal tree set (in particular for the minimal case, $k= U(n)$)? 

There are several natural combinatorial questions that can also be asked. For instance, we could define a generalisation of universal incompatibility $U(n,k)$, in which rather than requiring every split in $\mathcal{S}(X)$ to be represented in our tree set, we require only that all splits of size $k$ or less are displayed by a tree in the set (with the present paper of course corresponding to the case $k = \lfloor \frac{n}{2} \rfloor$). One may also consider other important split sets in place of $\mathcal{S}(X)$, such as the set of all splits $A|B$ in which some given subsets $A',B' \subset X$ must be placed in different partitions, that is, either $A' \subset A$ and $B' \subset B$ or $A' \subset B$ and $B' \subset A$.

\begin{figure}[H]
	\centering
	\includegraphics[width=15cm]{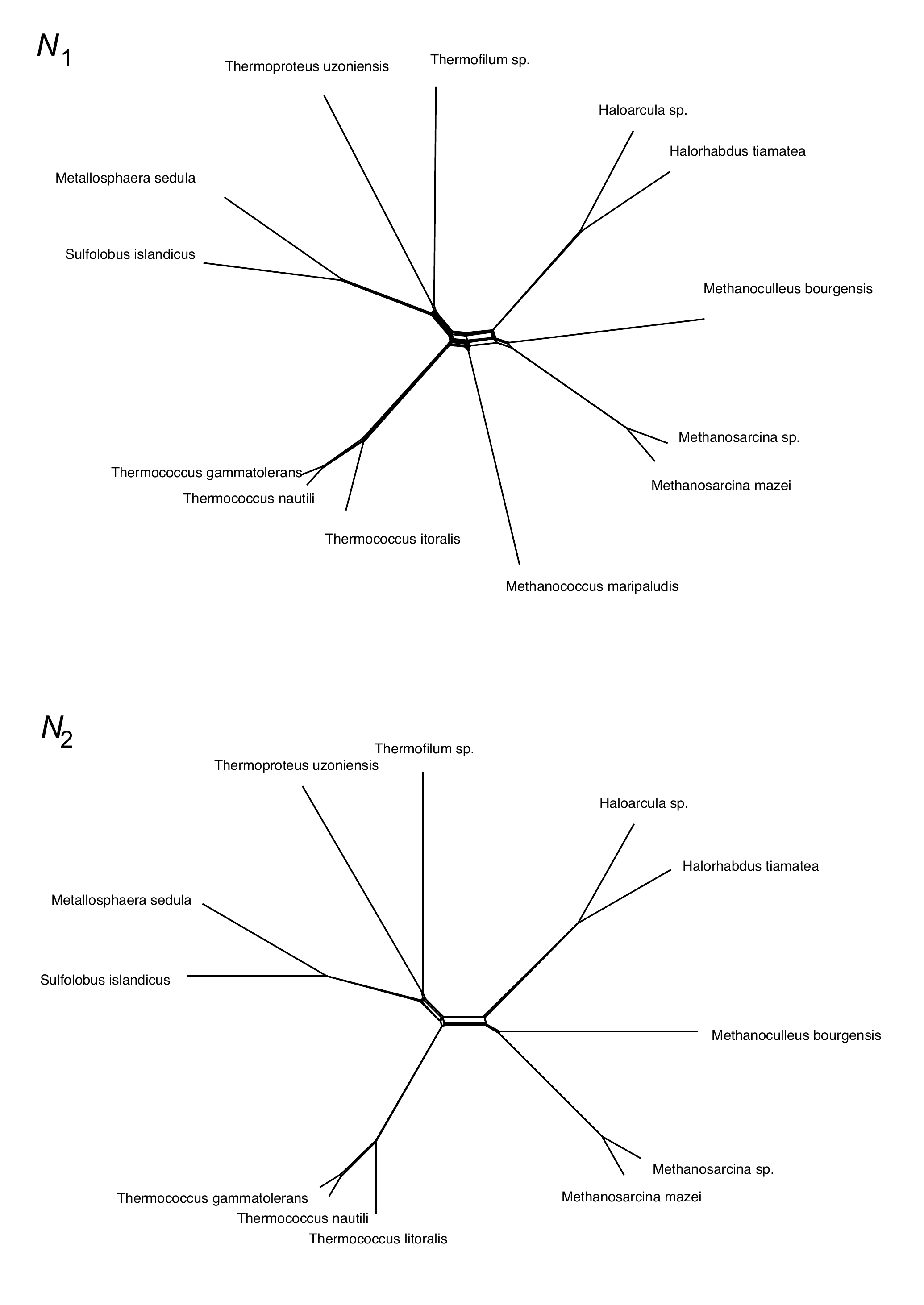}
	\caption{Networks $N_1$ and $N_2$ derived from archaeal genomes.}
	\label{f:archaea}
\end{figure}

\section*{Author Contributions}
MH designed the paper, performed all mathematical research and wrote the majority of the paper. NK prepared and analysed the data, and wrote the Applications and Discussion section.

\bibliographystyle{plainnat}
\bibliography{univ-sets}
\end{document}